\newtheorem{lemma}{Lemma}
\newtheorem{thm}{Theorem}
\newtheorem{ass}{Assumption}
\providecommand{\abs}[1]{\left\lvert#1 \right\rvert}
\providecommand{\norm}[1]{\left\lVert#1 \right\rVert}
\providecommand{\argmin}[1]{\underset{#1}{\operatorname{argmin}}}
\providecommand{\proj}[1]{ \Pi_{\mathcal{W}}\left[#1\right]}
\providecommand{\grad}[1]{#1}
\providecommand{\vc}[1]{\boldsymbol{#1}}
\begin{document}
%
\title{Distributed Strongly Convex Optimization}


\author{\IEEEauthorblockN{Konstantinos I. Tsianos}\\
\IEEEauthorblockA{Department of Electrical and Computer Engineering\\
McGill University\\
Montreal, Quebec H3A 0E9\\
Email: konstantinos.tsianos@gmail.com}\\
and \\
\IEEEauthorblockN{Michael G. Rabbat} \\
\IEEEauthorblockA{Department of Electrical and Computer Engineering\\
McGill University\\
Montreal, Quebec H3A 0E9\\
Email: michael.rabbat@mcgill.ca}}

\maketitle

\begin{abstract}
\boldmath
A lot of effort has been invested into characterizing the convergence rates of gradient based algorithms for non-linear convex optimization. Recently, motivated by large datasets and problems in machine learning, the interest has shifted towards distributed optimization. In this work we present a distributed algorithm for strongly convex constrained optimization. Each node in a network of $n$ computers converges to the optimum of a strongly convex, $L$-Lipchitz continuous, separable objective at a rate $O\left( \frac{\log{(\sqrt{n} T )}}{T}\right)$ where $T$ is the number of iterations. This rate is achieved in the online setting where the data is revealed one at a time to the nodes, and in the batch setting where each node has access to its full local dataset from the start. The same convergence rate is achieved in expectation when the subgradients used at each node are corrupted with additive zero-mean noise.
\end{abstract}


%
\IEEEpeerreviewmaketitle

\section{Introduction} 
In this work we focus on solving optimization problems of the form
\begin{align} \label{eq:general_optimization_problem}
\underset{w \in \mathcal{W}}{\operatorname{minimize}}\ F(w) = \frac{1}{T} \sum_{t=1}^T f^t(w)
\end{align}
where each function $f^1(w), f^2(w), \dots,$ is convex over a convex set $\mathcal{W} \subseteq \mathds{R}^d$. This formulation applies widely in machine learning scenarios, where $f^t(w)$ measures the loss of model $w$ with respect to data point $t$, and $F(w)$ is the average loss over $T$ data points. In particular, we are interested in the behavior of online distributed optimization algorithms for this sort of problem as the number of data points $T$ tends to infinity. We describe a distributed algorithm which, for strongly convex functions $f^t$, converges at a rate $O\left( \frac{\log(\sqrt{n} T)}{T}\right)$. To the best of our knowledge this is the first distributed algorithm to achieve this converge rate for constrained optimization without relying on smoothness assumptions on the objective or non-trivial communication mechanisms between the nodes. The result is true both in the online and the batch optimization setting.

When faced with a non-linear convex optimization problem, gradient-based methods can be applied to find the solution. The behavior of these algorithms is well-understood in the single-processor (centralized) setting. Under the assumption that the objective is $L$-Lipschitz continuous, projected gradient descent-type algorithms converge at a rate $O(\frac{1}{\sqrt{T}})$~\cite{zinkOnlineConvexOpt,nesterovDualOpt}. This rate is achieved both in an online setting where the $f^t$'s are revealed to the algorithm sequentially and in the batch setting where all $f^t$ are known in advance. If the cost functions are also strongly convex then gradient algorithms can achieve linear rates, $O\left(\frac{1}{T}\right)$, in the batch setting \cite{largeScaleSGD} and nearly-linear rates, $O\left(\frac{\log(T)}{T}\right)$, in the online setting \cite{logRegRepeatedGames}. Under additional smoothness assumptions, such as Lipschitz continuous gradients, the same rate of convergence can also be achieved by second order methods in the online setting~\cite{LogRegretOCOot,AdaptiveOGD}, while accelerated methods can achieve a quadratic rate in the batch setting; see~\cite{TsengAccelerated} and references therein.

The aim of this work is to extend the aforementioned results to the distributed setting where a network of processors jointly optimize a similar objective. Assuming the network is arranged as an expander graph with constant spectral gap, for general convex cost functions that are only $L$-Lipschitz continuous, the rate at which existing algorithms on a network of $n$ processors will all reach the optimum value is $O(\frac{\log(T \sqrt{n})}{\sqrt{T}})$, i.e., similar to the optimal single processor algorithms up to a logarithmic factor~\cite{dualAveraging,ContrainedDistrOpt}. This is true both in a batch setting and in an online setting, even when the gradients are corrupted by noise. The technique proposed in~\cite{DekelMiniBatches} makes use of mini-batches to obtain asymptotic rates $O\left( \frac{\sqrt{\log{(n)}}}{\sqrt{n T}}\right)$ for online optimization of smooth cost functions that have Lipschitz continuous gradients corrupted by bounded variance noise, and $O\left( \frac{1}{nT}\right)$ for smooth strongly convex functions. However, this technique requires that each node exchange messages with every other node at the end of each iteration. Finally, if the objective function is strongly convex and three times differentiable, a distributed version of Nesterov's accelerated method~\cite{FastDistributedGradMethods} achieves a rate of $O\left(\frac{\log(T)}{T}\right)$ for unconstrained problems in the batch setting, but the dependence on $n$ is not characterized.

The algorithm presented in this paper achieves a rate $O\left( \frac{\log{(\sqrt{n} T )}}{T}\right)$ for strongly convex functions. Our formulation allows for convex constraints in the problem and assumes the objective function is Lipschitz continuous and strongly convex; no higher-order smoothness assumptions are made. Our algorithm works in both the online and batch setting and it scales nearly-linearly in number of iterations for network topologies with fast information diffusion. In addition, at each iteration nodes are only required to exchange messages with a subset of other nodes in the network (their neighbors).

The rest of the paper is organized as follows. Section~\ref{sec:into_convex_opt} introduces notation and formalizes the problem. Section~\ref{sec:algorithm} describes the proposed algorithm and states our main results. These results are proven in Section~\ref{sec:analysis}, and Section~\ref{sec:stochastic_opt} extends the analysis to the case where gradients are noisy. Section~\ref{sec:experiments} presents the results of numerical experiments illustrating the performance of the algorithm, and the paper concludes in Section~\ref{sec:future}.

\section{Online Convex Optimization}
\label{sec:into_convex_opt}
Consider the problem of minimizing a convex function $F(w)$ over a convex set $\mathcal{W} \subseteq \mathds{R}^d$. Of particular interest is the setting where the algorithm sequentially receives noisy samples of the (sub)gradients of $F(w)$. This setting arises in online loss minimization for machine learning when the data arrives as a steam and the (sub)gradient is evaluated using an individual data point at each step~\cite{zinkOnlineConvexOpt}. Suppose the $t$th data point $x(t) \in \mathcal{X} \subseteq \mathds{R}^d$ is drawn i.i.d.~from an unknown distribution $\mathcal{D}$, and let $f^t(w) = f(w, x(t))$ denote the loss of this data point with respect to a particular model $w$. In this setting one would like to find the model $w$ that minimizes the expected loss $\mathds{E}_{\mathcal{D}}[f(w,x)]$, possibly with the constraint that $w$ be restricted to a model space $\mathcal{W}$. Clearly, as $T \rightarrow \infty$, the objective $F(w) = \frac{1}{T} \sum_{t=1}^T f^t(w) \rightarrow \mathds{E}_{\mathcal{D}}[f(w,x)]$, and so if the data stream is finite this motivates minimizing the empirical loss $F(w)$.

An online convex optimization algorithm observes a data stream $x(1), x(2), \dots$, and sequentially chooses a sequence of models $w(1), w(2), \dots$, after each observation. Upon choosing $w(t)$, the algorithm receives a subgradient $g(t) \in \partial f^t(w(t))$. The goal is for the sequence $w(1), w(2), \dots$ to converge to a minimizer $w^*$ of $F(w)$.


The performance of  an online optimization algorithm is measured in terms of the \textit{regret}:
\begin{align}
R(T) = \sum_{t=1}^T f^t(w(t))  - \min_{w \in \mathcal{W}} \sum_{t=1}^T f^t(w).
\end{align}
The regret measures the gap between the cost accumulated by the online optimization algorithm over $T$ steps and that of a model chosen to simultaneously minimize the total regret over all $T$ cost terms. If the costs $f^t$ are allowed to be arbitrary convex functions then it can be shown that the best achievable rate for any online optimization algorithm is $\frac{R(T)}{T} = \Omega(\frac{1}{\sqrt{T}})$, and this bound is also achievable~\cite{zinkOnlineConvexOpt}. The rate can be significantly improved if the cost functions has more favourable properties. 

\subsection{Assumptions}

\begin{ass} \label{ass:stronglyConvex}
We assume for the rest of the paper that each cost function $f^t(w) = f(w, x(t))$ is $\sigma$-strongly convex for all $x(t) \in \mathcal{X}$; i.e., there is a $\sigma > 0$ such that for all $\theta \in [0,1]$ and all $u,w \in \mathcal{W}$
\begin{align}
f^t(\theta u +& (1- \theta) w )  \leq  \notag  \\
&\theta f^t(u) + (1-\theta) f^t(w) - \frac{\sigma}{2} \theta (1 - \theta) \norm{u - w}^2.
\end{align}
\end{ass}

If each $f^t(w)$ is $\sigma$-strongly convex, it follows that $F(w)$ is also $\sigma$-strongly convex. Moreover, if $F(w)$ is strongly convex then it is also strictly convex, and so $F(w)$ has a unique minimizer which we denote by $w^*$.

\begin{ass} \label{ass:boundedGradients}
We also assume that the subgradients $\grad{g}(t)$ of each cost function $f^t$ are bounded by a known constant $L > 0$; i.e., $\norm{\grad{g}(t)} \leq L$ where $\norm{\cdot}$ is the ($\ell_2$) Euclidean norm.
\end{ass}

\subsection{Example: Training a Classifier}

For a specific example of this setup, consider the problem of training an SVM classifier using a hinge-loss with $\ell_2$ regularization \cite{logRegRepeatedGames}. In this case, the data stream consists of pairs $\{x(t), y(t)\}$ such that $x(t) \in \mathcal{X}$ and $y(t) \in \{-1, +1\}$. The goal is to minimize the misclassification error as measured by the $\ell_2$-regularized hinge loss. Formally, we wish to find the $w^* \in \mathcal{W} \subseteq \mathds{R}^d$ that solves
\begin{align} \label{eq:svm}
\operatorname{minimize}_{w \in \mathcal{W}} \frac{\sigma}{2} \norm{w}^2 + \frac{1}{m} \sum_{t=1}^{m} \max\{0, 1 - y(t) \langle w, x(t)\rangle\}
\end{align}
which is $\sigma$-strongly convex\footnote{Although the hinge loss itself is not strongly convex, adding a strongly convex regularizer makes the overall cost function strongly convex.}. For these types of problems, using a single-processor stochastic gradient descent algorithm, one can achieve $\frac{R(T)}{T} = O(\frac{\log{T}}{T})$\cite{logRegRepeatedGames} or $\frac{R(T)}{T} = O(\frac{1}{T})$ \cite{StochOptStrognlyConvex} by using different update schemes. 

\subsection{Distributed Online Convex Optimization}

In this paper, we are interested in solving online convex optimization problems with a network of computers. The computers are organized as a network $G=(V,E)$ with $\abs{V} = n$ nodes, and messages are only exchanged between nodes connected with an edge in $E$. 

\begin{ass} \label{ass:connected}
In this work we assume that $G$ is connected and undirected. 
\end{ass}

Each node $i$ receives a stream of data $x_i(1), x_i(2), \dots$, similar to the serial case, and the nodes must collaborate to minimize the network-wide objective 
\begin{align}
F(w) = \frac{1}{nT} \sum_{t=1}^T \sum_{i=1}^n f_i^t(w),
\end{align}
where $f_i^t(w) = f(w, x_i(t))$ is the cost incurred at processor $i$ at time $t$. In the distributed setting, the definition of regret is naturally extended to
\begin{align}
R(T) = \sum_{t=1}^T \sum_{i=1}^n f(w_i(t), x_i(t))  - \min_{w \in \mathcal{W}} \sum_{t=1}^T \sum_{i=1}^n f(w, x_i(t)).
\end{align}

For general convex cost functions, the distributed algorithm proposed in~\cite{dualAveraging} has been proven to have an average regret that decreases at a rate $\sqrt{T}$, similar to the serial case, and this result holds even when the algorithm receives noisy, unbiased, observations of the true subgradients at each step. In the next section, we present a distributed algorithm that achieves a nearly-linear rate of decrease of the average regret (up to a logarithmic factor) when the cost functions are strongly convex.

\section{Algorithm} \label{sec:algorithm}

Nodes must collaborate to solve the distributed online convex optimization problem described in the previous section. To that end, the network is endowed with a $n \times n$ consensus matrix $P$ which respects the structure of $G$, in the sense that $[P]_{ji} = 0$ if $(i,j) \notin E$. We assume that $P$ is doubly stochastic, although generalizations to the case where $P$ is row stochastic or column stochastic (but not both) are also possible \cite{TsianosCDC2012,TsianosACC2012}.

A detailed description of the proposed algorithm, \emph{distributed online gradient descent} (DOGD), is given in Algorithm~\ref{alg:dogd}. In the algorithm, each node performs a total of $T$ updates. One update involves processing a single data point $x_i(t)$ at each processor. The updates are performed over $k$ rounds, and $T_s$ updates are performed in round $s \leq k $. The main steps within each round (lines 9--11) involve updating an accumulated gradient variable, $z_i^k(t)$, by simultaneously incorporating the information received from neighboring nodes and taking a local gradient-descent like step. The accumulated gradient is projected onto the constraint set to obtain $w_i^k(t)$, where
\begin{align}
\proj{z} = \argmin{w \in \mathcal{W}} \norm{w - z}
\end{align}
denotes the Euclidean projection of $z$ onto $\mathcal{W}$, and then this projected value is merged into a running average $\hat{w}_i(r)$. The step size parameter $a_k$ remains constant within each round, and the step size is reduced by half at the end of each round. The number of updates per round doubles from one round to the next.

Note that the algorithm proposed here differs from the distributed dual averaging algorithm described in~\cite{dualAveraging}, where a proximal projection is used rather than the Euclidean projection. Also, in contrast to the distributed subgradient algorithms described in~\cite{distrStochSubgrOpt}, DOGD maintains an accumulated gradient variable in $z_i^k(t+1)$ which is updated using $\{z_j^k(t)\}$ as opposed to the primal feasible variables $\{w_j^k(t)\}$. Finally, key to achieving fast convergence is the exponential decrease of the learning rate after performing an exponentially increasing number of gradient steps together with a proper initialization of the learning rate. 

The next section provides theoretical guarantees on the performance of DOGD.

\begin{algorithm}[t]
\caption{DOGD}
\label{alg:dogd}
\begin{algorithmic}[1]
\State $\operatorname{Initialize:} T_1 =  \left\lceil \frac{2 }{\sigma} \right\rceil, a_1 = 1, k=1, z_i^1(1) = w_i^1(1) = 0$
\State
\While{$\sum_{s=1}^k T_s \leq T$} \Comment{Each node $i$ repeats}
\For{$t=1$ to $T_k$} 
\State Send/receive $z_i^k(t)$ and $z_j^k(t)$ to/from neighbors
\State Obtain next subgradient $\grad{g}_i(t) \in \partial_w f_i^t(w_i^k(t))$
\State $z_i^k(t+1) = \sum_{j=1}^n p_{ij} z_j^k(t) - a_k \grad{g}_i(t)$
\State $w_i^k(t+1) = \proj{z_i^k(t+1)}$
\EndFor
\State $w_i^{k+1}(1) = w_i^k(T_k)$
\State $z_i^{k+1}(1) = w_i^{k+1}(1)$
\State $\hat{w}_i^{k+1} = \frac{1}{T_k} \sum_{t=1}^{T_k} w_i^k(t) $
\State
\State $T_{k+1} \leftarrow 2  T_k$
\State $a_{k+1} \leftarrow  \frac{a_k}{2}$
\State $k = k + 1$
\EndWhile
\end{algorithmic}
\end{algorithm}

\section{Convergence Analysis} \label{sec:analysis}

Our main convergence result, stated below, guarantees that the average regret decreases at a rate which is nearly linear.

\begin{thm} \label{thm:dogd_convergence} 
Let Assumptions~\ref{ass:stronglyConvex}--\ref{ass:connected} hold and suppose that the consensus matrix $P$ is doubly stochastic with constant $\lambda_2$. Let $w^*$ be the minimizer of $F(w)$. Then the sequence $\{\hat{w}_i^k\}$ produced by nodes running DOGD to minimize $F(w)$ obeys
\begin{align}
F(\hat{w}_i^{k+1}) - F(w^*) = O\left( \frac{\log{(\sqrt{n} T )}}{T} \right),
\end{align}
where $k = \lfloor \log_2(T/2 + 1)\rfloor$ is the number of rounds executed during a total of $T$ gradient steps per node, and $\hat{w}_i^{k}$ is the running average maintained locally at each node.
\end{thm}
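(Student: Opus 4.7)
The plan is to analyze DOGD round by round and then stitch the rounds together using the doubling/halving schedule. Within round $k$, both the step size $a_k$ and the per-node iteration count $T_k$ are constant, so the analysis within a round mirrors that of distributed projected subgradient descent with a constant step size. Specifically, I would introduce the network average $\bar{z}^k(t) = \frac{1}{n}\sum_{i=1}^n z_i^k(t)$, exploit double stochasticity of $P$ to get the clean recursion
\begin{align}
\bar{z}^k(t+1) = \bar{z}^k(t) - \frac{a_k}{n}\sum_{i=1}^n \grad{g}_i(t),
\end{align}
and define the virtual centralized iterate $\bar{w}^k(t) = \proj{\bar{z}^k(t)}$. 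A standard projected subgradient identity applied to $\bar{w}^k(t)$, combined with $\sigma$-strong convexity of $F$, yields a per-step decrease on $\norm{\bar{w}^k(t) - w^*}^2$ up to the usual $a_k^2 L^2$ noise term and a cross term coming from the fact that each node evaluates its subgradient at $w_i^k(t)$ rather than at $\bar{w}^k(t)$.

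The second ingredient is a network-disagreement bound. Unrolling the consensus recursion for $z_i^k(t)$ and using Assumption~\ref{ass:boundedGradients} together with the spectral gap $1-\lambda_2$ of $P$, I would show
\begin{align}
\norm{z_i^k(t) - \bar{z}^k(t)} = O\Bigl(\tfrac{a_k L}{1-\lambda_2}\log(T_k\sqrt{n})\Bigr),
\end{align}
with the $\log(T_k\sqrt{n})$ coming from truncating the geometric mixing series at the round length and accounting for the $\sqrt{n}$ in the $\ell_2$-to-$\ell_\infty$ comparison across nodes. Because projection onto $\mathcal{W}$ is nonexpansive and $F$ is $L$-Lipschitz (Assumption~\ref{ass:boundedGradients}), this transfers directly into bounds on $\norm{w_i^k(t) - \bar{w}^k(t)}$ and on $|F(w_i^k(t)) - F(\bar{w}^k(t))|$.

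Combining the two ingredients and averaging over the $T_k$ iterations of round $k$, Jensen's inequality applied to $\hat{w}_i^{k+1} = \frac{1}{T_k}\sum_{t=1}^{T_k} w_i^k(t)$ yields a round-$k$ bound of the form
\begin{align}
F(\hat{w}_i^{k+1}) - F(w^*) \leq \frac{C_1}{T_k a_k}\norm{w_i^k(1) - w^*}^2 + C_2 a_k L^2 + \frac{C_3 a_k L^2 \log(T_k\sqrt{n})}{1-\lambda_2},
\end{align}
where the first term comes from the initialization at the start of the round, the second from the constant-step projected gradient term, and the third from the network disagreement. Strong convexity gives $\norm{w_i^k(1) - w^*}^2 \leq \frac{2}{\sigma}\bigl(F(w_i^k(1)) - F(w^*)\bigr)$, which couples the initialization error of round $k{+}1$ to the function-value error at the end of round $k$.

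At this point I would set up the induction across rounds using the schedule $T_{k+1} = 2T_k$ and $a_{k+1} = a_k/2$, initialized so that $T_1 a_1 \geq 2/\sigma$ as in line 1 of Algorithm~\ref{alg:dogd}. With this choice, $T_k a_k = T_1 a_1$ is constant, so the contraction factor $C_1/(T_k a_k \sigma)$ can be made a constant strictly less than one, which lets the initialization term decay geometrically with $k$ instead of accumulating. The remaining additive contributions behave like $a_k L^2 \log(T_k\sqrt{n})$, and since $a_k = 2^{-(k-1)}$ while $T \asymp T_k = 2^{k-1} T_1$ at the end of round $k$, one has $a_k = O(1/T)$ and $\log(T_k\sqrt{n}) = O(\log(T\sqrt{n}))$, giving the claimed $O(\log(\sqrt{n} T)/T)$ rate. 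The main obstacle I expect is the bookkeeping for the cross-round telescoping, in particular verifying that the geometric contraction of the initialization term is sharp enough given that $z_i^{k+1}(1)$ is reinitialized to $w_i^{k+1}(1)$ (line 11), so that the dual variable is effectively reset at the boundary of each round and no residual disagreement is carried across rounds.
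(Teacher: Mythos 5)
Your proposal follows essentially the same route as the paper: the network-average sequence $\bar z^k(t)$ analyzed as a single-processor lazy-projection / constant-step method within each round, a spectral-gap mixing bound giving the $a_k L \log(T_k\sqrt n)$ disagreement term, strong convexity to convert the end-of-round function gap into the next round's initialization distance, and the doubling/halving schedule with $\sigma a_1 T_1 = 2$ yielding geometric contraction of the initialization term plus $a_k\log(T_k\sqrt n)=O(\log(\sqrt n\, T)/T)$. The one caveat is your hope that the reset $z_i^{k+1}(1)=w_i^{k}(T_k)$ carries no residual disagreement across rounds --- it does, since the $w_i^k(T_k)$ differ across nodes, which is exactly why the paper's round bound contains the extra term $3L^2\sum_{s=1}^{k-1}a_sT_s/T_k$; under the chosen schedule $a_sT_s$ is constant, so this term is $O(k/2^k)$ and does not affect the rate.
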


\textit{Remark 1:} We state the result for the case where $\lambda_2$ is constant. This is the case when $G$ is, e.g., a complete graph or an expander graph \cite{kRegExpanders}. For other graph topologies where $\lambda_2$ shrinks with $n$ and consensus does not converge fast, the convergence rate dependence on $n$ is going to be worse due to a factor $1 - \sqrt{\lambda_2}$ in the denominator; see the proof of Theorem~\ref{thm:dogd_convergence} below for the precise dependence on the spectral gap $1 - \sqrt{\lambda_2}$.

\textit{Remark 2:} The theorem characterizes performance of the online algorithm DOGD, where the data and cost functions $f^t_i$ are processed sequentially at each node in order to minimize an objective of the form 
\begin{align}
F(w) = \frac{1}{n} \sum_{i=1}^n \frac{1}{T} \sum_{t=1}^T f_i^t(w).
\end{align}
However, as pointed out in \cite{logRegRepeatedGames}, if the entire dataset is available in advance, we can use the same scheme to do batch minimization by effectively setting $f_i^t(w) = f_i^1(w)$, where $f_i^1(w)$ is the objective function accounting for the entire dataset available to node $i$. Thus, the same result holds immediately for a batch version of DOGD.

The remainder of this section is devoted to the proof of Theorem~\ref{thm:dogd_convergence}. Our analysis follows arguments that can be found in \cite{zinkOnlineConvexOpt, StochOptStrognlyConvex,dualAveraging} and references therein. We first state and prove some intermediate results.

\subsection{Properties of Strongly Convex Functions}

Recall the definition of $\sigma$-strong convexity given in Assumption~\ref{ass:stronglyConvex}. A direct consequence of this definition is that if $F(w)$ is $\sigma$-strongly convex then
\begin{align}
F(w) - F(w^*) \geq \frac{\sigma}{2} \norm{w - w^*}^2.
\end{align}

Strong convexity can be combined with the assumptions above to upper bound the difference $F(w)-F(w^*)$ for an arbitrary point $w \in \mathcal{W}$.

\begin{lemma} \label{lem:startingPoint}
Let $w^*$ be the minimizer of $F(w)$. For all $w \in \mathcal{W}$, we have $F(w) - F(w^*) \leq \frac{2 L^2}{\sigma}$.
\end{lemma}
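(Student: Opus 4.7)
The plan is to combine the standard subgradient-inequality upper bound for $F(w)-F(w^*)$ with the strong-convexity lower bound, eliminate $\|w-w^*\|$, and solve the resulting one-variable inequality.

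First, I would pick any subgradient $g \in \partial F(w)$. Since $F = \frac{1}{T}\sum_{t} f^t$ is a convex combination of convex functions, a subgradient of $F$ at $w$ can be taken as the average of subgradients of the $f^t$ at $w$. Assumption~\ref{ass:boundedGradients} then gives $\|g\| \leq L$ by the triangle inequality. Using the defining subgradient inequality of convexity at $w$ evaluated at $w^*$, and then Cauchy--Schwarz,
\begin{align}
F(w) - F(w^*) \;\leq\; \langle g,\, w - w^*\rangle \;\leq\; \|g\|\,\|w - w^*\| \;\leq\; L\,\|w - w^*\|.
\end{align}

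Next, I would invoke the strong-convexity consequence stated just before the lemma, namely $F(w) - F(w^*) \geq \frac{\sigma}{2}\|w - w^*\|^2$, which I can solve for the norm to obtain $\|w - w^*\| \leq \sqrt{2(F(w)-F(w^*))/\sigma}$. Substituting this into the previous display yields, after setting $\Delta = F(w) - F(w^*) \geq 0$, the inequality $\Delta \leq L\sqrt{2\Delta/\sigma}$. Squaring and dividing by $\Delta$ (trivially handling the $\Delta = 0$ case separately) gives $\Delta \leq 2L^2/\sigma$, which is exactly the desired bound.

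There is no real obstacle here; the only minor care is justifying that a subgradient of $F$ is bounded by $L$ (via averaging subgradients of the $f^t$) and handling the degenerate $\Delta = 0$ case in the final algebraic step. Everything else is a two-line chain: a subgradient upper bound, a strong-convexity lower bound, and elimination of $\|w - w^*\|$.
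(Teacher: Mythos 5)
Your proof is correct and follows essentially the same route as the paper's: bound $F(w)-F(w^*)$ by $L\|w-w^*\|$ via the subgradient inequality, combine with the strong-convexity lower bound $\frac{\sigma}{2}\|w-w^*\|^2 \leq F(w)-F(w^*)$, and eliminate $\|w-w^*\|$. The paper eliminates the norm by first deducing $\|w-w^*\| \leq 2L/\sigma$ and substituting back, whereas you solve the one-variable inequality in $\Delta$ directly; these are algebraically equivalent, and your extra care about why a subgradient of $F$ is bounded by $L$ is a minor (welcome) refinement.
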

\begin{proof}
For any subgradient $\grad{g}$ of $F$ at $w$, by convexity we know that $F(w) - F(w^*) \leq \langle \grad{g}, w - w^* \rangle$. It follows from Assumption~\ref{ass:boundedGradients} that $F(w) - F(w^*) \leq L \norm{w- w^*} $. Furthermore, from Assumption~\ref{ass:stronglyConvex}, we obtain that $\frac{\sigma}{2} \norm{w - w^*}^2 \leq L \norm{w - w^*}$ or $\norm{w - w^*} \leq \frac{2 L}{ \sigma}$. As a result, $F(w) - F(w^*) \leq \frac{2 L^2}{\sigma}$.
\end{proof}

\subsection{The Lazy Projection Algorithm}

The analysis of DOGD below involves showing that the average state, $\frac{1}{n}\sum_{i=1}^n w_i^k(t)$, evolves according to the so-called (single processor) \textit{lazy projection} algorithm~\cite{zinkOnlineConvexOpt}, which we discuss next. The lazy projection algorithm is an online convex optimization scheme for the serial problem discussed at the beginning of Section~\ref{sec:into_convex_opt}. A single processor sequentially chooses a new variable $w(t)$ and receives a subgradient $g(t)$ of $f(w(t), x(t))$. The algorithm chooses $w(t+1)$ by repeating the steps
\begin{align}
z(t+1) = & z(t) - a \grad{g}(t) \label{eqn:lazyProjection1} \\ 
w(t+1) = & \proj{z(t+1)} \label{eqn:lazyProjection2}.
\end{align}
By unwrapping the recursive form of \eqref{eqn:lazyProjection1}, we get
\begin{align}
z(t+1) = -a \sum_{s=1}^t g(t) + z(1). \label{eqn:lazyProjectionUnwrapped}
\end{align}

The following is a typical result for subgradient descent-style algorithms, and is useful towards eventually characterizing how the regret accumulates. Its proof can be found in the appendix of the extended version of~\cite{zinkOnlineConvexOpt}.

\begin{thm}[Zinkevich~\cite{zinkOnlineConvexOpt}] \label{thm:zink} 
Let $w(1) \in \mathcal{W}$, let $a > 0$, and set $z(1) = w(1)$. After $T$ rounds of the serial lazy projection algorithm \eqref{eqn:lazyProjection1}--\eqref{eqn:lazyProjection2}, we have
\begin{align}
\sum_{t=1}^T \langle \grad{g}(t), w(t) - w^* \rangle \leq \frac{\norm{w(1) - w^*}^2}{2 a} + \frac{T a L^2}{2}.
\end{align}
\end{thm}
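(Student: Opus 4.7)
The plan is to follow the standard telescoping-potential argument for projected subgradient methods. I would take $V_t = \norm{z(t) - w^*}^2$ as the Lyapunov potential. Plugging in the lazy update $z(t+1) = z(t) - a \grad{g}(t)$ and expanding the square gives
\begin{align*}
V_{t+1} = V_t - 2a \langle \grad{g}(t), z(t) - w^* \rangle + a^2 \norm{\grad{g}(t)}^2,
\end{align*}
which rearranges to
\begin{align*}
\langle \grad{g}(t), z(t) - w^* \rangle = \frac{V_t - V_{t+1}}{2a} + \frac{a}{2}\norm{\grad{g}(t)}^2.
\end{align*}

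Next I would sum this identity from $t=1$ to $T$ so that the $V_t$ terms telescope. Using the initialization $z(1)=w(1)$, the nonnegativity $V_{T+1}\geq 0$, and the subgradient bound $\norm{\grad{g}(t)}\leq L$ from Assumption~\ref{ass:boundedGradients}, this yields
\begin{align*}
\sum_{t=1}^T \langle \grad{g}(t), z(t) - w^* \rangle \leq \frac{\norm{w(1)-w^*}^2}{2a} + \frac{T a L^2}{2}.
\end{align*}

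The remaining step is to trade $z(t)$ for $w(t) = \proj{z(t)}$ inside the inner product in order to recover the statement as written. Since $w^*\in\mathcal{W}$, the first-order characterization of the Euclidean projection gives $\langle z(t)-w(t),\, w^*-w(t)\rangle \leq 0$, equivalently the Pythagorean-type inequality $\norm{z(t)-w^*}^2 \geq \norm{w(t)-w^*}^2 + \norm{z(t)-w(t)}^2$. This allows one to rewrite the telescoping argument with $V_t$ replaced by $\norm{w(t)-w^*}^2$, absorbing the projection residual into the nonnegative term that gets dropped at the end, which is the route taken in the cited appendix of \cite{zinkOnlineConvexOpt}.

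The only nontrivial step is this conversion from the unprojected iterates to the projected ones: a naive per-step comparison of $\langle \grad{g}(t),\, w(t)-z(t)\rangle$ is not signed, so one really needs the variational inequality for the projection rather than a term-by-term bound. Everything else is mechanical algebra together with a single application of $\norm{\grad{g}(t)}\leq L$, so I expect the subtlety in the projection bookkeeping to be the main (and only) obstacle.
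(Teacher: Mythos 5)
The paper itself does not prove this theorem---it is quoted from Zinkevich and the proof is deferred to the appendix of the extended version of \cite{zinkOnlineConvexOpt}---so I am judging your argument on its own merits. Your first step is correct: with $V_t=\norm{z(t)-w^*}^2$, expanding and telescoping gives $\sum_{t=1}^T\langle g(t),\,z(t)-w^*\rangle\le\frac{\norm{w(1)-w^*}^2}{2a}+\frac{TaL^2}{2}$. The gap is exactly where you suspect it: the conversion from $z(t)$ to $w(t)$, which is the entire content of the lazy-projection analysis (for greedy projection there is nothing to convert). The mechanism you propose---``rewrite the telescoping argument with $V_t$ replaced by $\norm{w(t)-w^*}^2$, absorbing the projection residual into the nonnegative term that gets dropped at the end''---does not go through. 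The Pythagorean inequality $\norm{z(t)-w^*}^2\ge\norm{w(t)-w^*}^2+\norm{z(t)-w(t)}^2$ is a \emph{lower} bound on $V_t$, whereas the recursion $V_{t+1}=V_t-2a\langle g(t),z(t)-w^*\rangle+a^2\norm{g(t)}^2$ requires $V_t$ to be bounded \emph{above} by a quantity involving $w(t)$ for the substitution to telescope. Moreover, the cross term still contains $z(t)-w^*$ rather than $w(t)-w^*$, and the residual $\norm{z(t)-w(t)}^2$ appears at \emph{every} step, not only the last one; in lazy projection $z(t)$ drifts away from $\mathcal{W}$ (it accumulates $-a\sum_{s<t}g(s)$ without ever being pulled back), so this residual can grow with $t$ and cannot simply be discarded once at the end.

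The repair is to carry the residual inside the potential rather than discard it: use $\Phi_t=\norm{z(t)-w^*}^2-\norm{z(t)-w(t)}^2=2q(z(t))$ where $q(y)=\tfrac12\norm{y-w^*}^2-\tfrac12\norm{y-\proj{y}}^2$. This $q$ is convex (since $\tfrac12\norm{y}^2-\tfrac12\min_{x\in\mathcal{W}}\norm{y-x}^2=\max_{x\in\mathcal{W}}\{\langle y,x\rangle-\tfrac12\norm{x}^2\}$ is a pointwise maximum of affine functions, and $q$ differs from it by an affine term), its gradient is $\nabla q(y)=\proj{y}-w^*$, and it is $1$-smooth because the projection is nonexpansive. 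The descent lemma applied along $z(t+1)=z(t)-a g(t)$ then yields
\begin{align*}
q(z(t+1))\;\le\; q(z(t))-a\,\langle g(t),\,w(t)-w^*\rangle+\tfrac{a^2}{2}\norm{g(t)}^2,
\end{align*}
which is precisely the per-step inequality you were after, with $w(t)$ (not $z(t)$) in the inner product and with the constant $\tfrac{a^2L^2}{2}$ needed for the stated bound. Telescoping, $q(z(1))=\tfrac12\norm{w(1)-w^*}^2$ (because $z(1)=w(1)\in\mathcal{W}$), and $q(z(T+1))\ge0$ (Pythagoras, since $w^*\in\mathcal{W}$) give the theorem. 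Without this step---or an equivalent Fenchel-conjugate/FTRL argument---your proposal establishes the bound only for $\sum_t\langle g(t),z(t)-w^*\rangle$, not for the quantity in the statement.
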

Theorem \ref{thm:zink} immediately yields the same bound for the regret of lazy projection \cite{zinkOnlineConvexOpt}.

\subsection{Evolution of Network-Average Quantities in DOGD}

We turn our attention to Algorithm \ref{alg:dogd}. A standard approach to studying convergence of distributed optimization algorithms, such as DOGD, is to keep track of the discrepancy between every node's state and an average state sequence defined as
\begin{align}
\overline{z}^k(t) = \frac{1}{n} \sum_{i=1}^n z_i^k(t) \quad \text{and} \quad \overline{w}^k(t) = \proj{\overline{z}^k(t)}.
\end{align}
Observe that $\overline{z}^k(t)$ evolves in a simple recursive manner,
\begin{align}
\overline{z}^k(t+1) = &\frac{1}{n} \sum_{i=1}^n z_i^k(t+1) \\
= & \frac{1}{n} \sum_{i=1}^n \left[ \sum_{j=1}^n p_{ij} z_j^k(t) - a_k \grad{g}_i(t) \right] \\
= & \frac{1}{n} \sum_{j=1}^n z_j^k(t) \sum_{i=1}^n  p_{ij} - \frac{ a_k}{n} \sum_{i=1}^n \grad{g}_i(t) \\
= & \overline{z}(t) - \frac{ a_k}{n} \sum_{i=1}^n \grad{g}_i(t)  \label{eq:doubly_stoch_P}\\
= & -a_k \sum_{s=1}^t \frac{1}{n} \sum_{i=1}^n \grad{g}_i(s) + \frac{1}{n}\sum_{i=1}^n z_i^k(1) \label{eq:zbar}
\end{align}
where equation \eqref{eq:doubly_stoch_P} holds since $P$ is doubly stochastic. Notice (cf.~eqn.~\eqref{eqn:lazyProjectionUnwrapped}) that the states $\{\overline{z}^k(t), \overline{w}^k(t)\}$ evolve according to the lazy projection algorithm with gradients $\overline{g}(t)= \frac{1}{n}  \sum_{i=1}^n \grad{g}_i(t) $ and learning rate $a_k$. 
In the sequel, we will also use an analytic expression for $z_i^k(t)$ derived by back substituting in its recursive update equation. After some algebraic manipulation, we obtain
\begin{align}
z_i^k(t) = & - a_k \sum_{s=1}^{t-1} \sum_{j=1}^n \left[P^{t-s+1}\right]_{ij} \grad{g}_j(s-1) - a_k \grad{g}_i(t-1) \nonumber \\
& + \sum_{j=1}^n [P^t]_{ij} z_j^k(1), \label{eqn:DOGD_unwrapped}
\end{align}
and since the projection in non-expansive and $z_i^1(1) = 0, \forall i$,
\begin{align}
\norm{z_i^{k+1}(1)}= & \norm{w_i^{k+1}(1)} = \norm{w_i^k(T_k)} = \norm{\proj{z_i^k(T_k)}}   \\
\leq &\norm{z_i^{k}(T_k)} \\
 \leq & \norm{ - a_{k} \sum_{t=1}^{T_{k}-1} \sum_{i=1}^n  \left[ P^{T_k - s + 1}\right]_{ij} g_i(s-1)} \nonumber \\
& + \norm{-a_k g_i(T_k - 1)} + \sum_{j=1}^n \left[P^{T_k} \right]_{ij} \norm{z_j^k(1) }  \\
\leq & a_k T_k L + \sum_{j=1}^n \left[P^{T_k} \right]_{ij} \norm{z_j^k(1) } \\
\leq & \cdots \\
\leq & L \sum_{s=1}^{k} a_{s} T_{s}. \label{eq:znorm_bound} 
\end{align}

\subsection{Analysis of One Round of DOGD}

Next, we focus on bounding the amount of regret accumulated during the $k$th round of DOGD (lines 5--12 of Algorithm~1) during which the learning rate remains fixed at $a_k$. 
Using Assumptions~\ref{ass:stronglyConvex}, \ref{ass:boundedGradients}, and the triangle inequality we have that
\begin{align}
\sum_{t=1}^{T_k} & [ F(w_i^{k}(t)) -   F(w^*)] \notag \\
= &  \sum_{t=1}^{T_k}\left[F\left(\overline{w}^k(t) \right) - F(w^*)  + F(w_i^{k}(t)) - F\left( \overline{w}^k(t) \right) \right]  \\
\leq &   \sum_{t=1}^{T_k} \left[ F( \overline{w}^k(t) ) - F(w^*)  + L\norm{  w_i^{k}(t) - \overline{w}^k(t) } \right]  \\
\leq & \sum_{t=1}^{T_k}  \frac{1}{n} \sum_{i=1}^n  \left[ f_i(w_i^k(t)) - f_i(w^*) \right]  \notag \\
& + \sum_{t=1}^{T_k} \frac{1}{n} \sum_{i=1}^n  \left[ f_i(\overline{w}^k(t))-  f_i(w_i^k(t)) \right]  \notag \\
& + \sum_{t=1}^{T_k} L \norm{w_i^k(t) - \overline{w}^k(t)} \\
 \leq  &  \underbrace{\sum_{t=1}^{T_k} \frac{1}{n}  \sum_{i=1}^n \langle g_i(t), w_i^k(t) - w^*\rangle}_{A_1}   \notag \\
& + \sum_{t=1}^{T_k} \frac{1}{n} \sum_{i=1}^n L \norm{\overline{w}^k(t) -  w_i^k(t)}  \notag \\
& + \sum_{t=1}^{T_k} L \norm{w_i^k(t) - \overline{w}^k(t)}. 
\end{align}
For the first summand we have
\begin{align}
A_1 =  &\sum_{t=1}^{T_k} \frac{1}{n} \sum_{i=1}^n \langle g_i(t), w_i^k(t) - w^*\rangle \\
 \leq & \sum_{t=1}^{T_k} \frac{1}{n}  \sum_{i=1}^n \langle \grad{g}_i(t) , \overline{w}^k(t) - w^*\rangle  \notag \\
 & +    \sum_{t=1}^{T_k} \frac{1}{n} \sum_{i=1}^n \langle \grad{g}_i(t) , w_i^k(t) - \overline{w}^k(t)\rangle  \\
  \leq & \underbrace{\sum_{t=1}^{T_k} \frac{1}{n}  \sum_{i=1}^n \langle \grad{g}_i(t) , \overline{w}^k(t) - w^*\rangle}_{A_2}   \notag \\
 & +    \sum_{t=1}^{T_k} \frac{1}{n} \sum_{i=1}^n L \norm{w_i^k(t) - \overline{w}^k(t)}. \label{eq:optimization_error_term}
\end{align}

To bound term $A_2$ we invoke Theorem \ref{thm:zink} for the average sequences $\{\overline{w}^k(t)\}$ and $\{\overline{z}^k(t)\}$. 
\begin{align}
A_2  = & \sum_{t=1}^{T_k} \frac{1}{n} \sum_{i=1}^n \langle  \grad{g}_i(t) , \overline{w}^k(t) - w^*\rangle \label{eq:bound_starting_point} \\
 = & \sum_{t=1}^{T_k}  \Big\langle \frac{1}{n} \sum_{i=1}^n \grad{g}_i(t) , \proj{ \overline{z}^k(t)}  - w^* \Big\rangle \\
  = & \sum_{t=1}^{T_k}  \Big\langle \overline{\grad{g}}(t) , \proj{ \overline{z}^k(t)}  - w^* \Big\rangle \\
\leq & \frac{\norm{\overline{w}^k(1) - w^*}^2}{2 a_k} + \frac{{T_k} a_k \norm{\frac{1}{n} \sum_{i=1}^n \grad{g}_i(t)}^2 }{2} \\
= &  \frac{\norm{\overline{w}^k(1) - w^*}^2}{2 a_k} + \frac{{T_k} a_k L^2 }{2}.
\end{align}
Collecting now all the partial results and bounds, so far we have shown that
\begin{align}
\sum_{t=1}^{T_k}  [ F(w_i^{k}(t)) & -  F(w^*) ]\leq  \frac{\norm{\overline{w}^k(1) - w^*}^2}{2 a_k} + \frac{{T_k} a_k L^2 }{2} \notag \\
 & +\sum_{t=1}^{T_k} \frac{2}{n}  \sum_{i=1}^n L  \norm{w_i^k(t) - \overline{w}^k(t)}  \notag \\
& + \sum_{t=1}^{T_k} L\norm{w_i^k(t) - \overline{w}^k(t)} .
\end{align}
and since the projection operator is non-expansive, we have
\begin{align}
\sum_{t=1}^{T_k}  [ F(w_i^{k}(t)) & -  F(w^*) ] \leq  \frac{\norm{\overline{w}^k(1) - w^*}^2}{2 a_k} + \frac{{T_k} a_k L^2 }{2} \nonumber \\
 & +\sum_{t=1}^{T_k} \frac{2}{n}  \sum_{i=1}^n L \norm{z_i^k(t) - \overline{z}^k(t)}  \label{eqn:oneRoundBound}  \\
& + \sum_{t=1}^{T_k} L  \norm{z_i^k(t) - \overline{z}^k(t)} . \nonumber
\end{align}
The first two terms are standard for subgradient algorithms using a constant step size. The last two terms depend on the error between each node's iterate $z_i^k(t)$ and the network-wide average $\overline{z}^k(t)$, which we bound next.

\subsection{Bounding the Network Error}
What remains is to bound the term $\norm{z_i^k(t) - \overline{z}^k(t)}$ which describes an error induced by the network since the different nodes do not agree on the direction towards the optimum. By recalling that $P$ is doubly stochastic and manipulating the recursive expressions \eqref{eqn:DOGD_unwrapped} and \eqref{eq:zbar} for $z_i(t)$ and $\overline{z}^k(t)$ using arguments similar to those in~\cite{dualAveraging,TsianosACC2012}, we obtain the bound,
\begin{align}
\norm{z_i^k(t) - \overline{z}^k(t)} \leq & a_k L \sum_{s=1}^{t-1} \sum_{j=1}^n  \abs{ \frac{1}{n} \vc{1}^{T} - \left[ P^{t-s-1}\right]_{ij}}_1 + 2 a_k L \notag \\
& + \sum_{j=1}^n \abs{ \frac{1}{n} - [P^t]_{ij} } \norm{ z_j^k(1)} \\
= & a_k L \sum_{s=1}^{t-1} \norm{\frac{1}{n} \vc{1}^{T} - \left[ P^{t-s-1}\right]_{i,:}}_1 + 2 a_k L \notag \\
& + \sum_{j=1}^n \abs{ \frac{1}{n} - [P^t]_{ij} } \norm{ z_j^k(1)}.
\end{align}
The $\ell_1$ norm can be bounded using Lemma~\ref{lem:P_tv_convergence}, which is stated and proven in the Appendix, and using \eqref{eq:znorm_bound} we arrive at
\begin{align}
\norm{z_i^k(t) - \overline{z}^k(t)} \leq &2 a_k L \frac{\log{(T_k \sqrt{n})}}{1 - \sqrt{\lambda_2}} + 3 a_k L   + \frac{L \sum_{s=1}^{k-1} a_{s} T_{s} }{T_k}
\end{align}
where $\lambda_2$ is the second largest eigenvalue of $P$. Using this bound in equation~\eqref{eqn:oneRoundBound},  along with the fact that $F(w)$ is convex, we conclude that
\begin{align}
F(\hat{w}_i^{k+1}) - F(w^*)  = & F\left(\frac{1}{T_k} \sum_{t=1}^{T_k} w_i^{k}(t) \right) - F(w^*)\\
\leq & \frac{1}{T_k} \sum_{t=1}^{T_k} \left[ F( w_i^{k}(t) ) - F(w^*) \right] \\
 \leq &  \frac{\norm{\overline{w}^k(1) - w^*}^2}{2 a_k T_k} + \frac{ a_k L^2 }{2} \notag \\
& + L^2 a_k \left[ 6 \frac{\log{(T_k \sqrt{n})}}{1 - \sqrt{\lambda_2}} + 9 \right] \notag \\
& + \frac{3 L^2 \sum_{s=1}^{k-1} a_{s} T_{s} }{T_k},
\end{align}
where $\overline{w}^k(1) = \proj{\frac{1}{n} \sum_{i=1}^{n} z_i^k(1) }$.

\subsection{Analysis of DOGD over Multiple Rounds}

As our last intermediate step, we must control the learning rate and update of $T_k$ from round-to-round to ensure linear convergence of the error. From strong convexity of $F$ we have
\begin{align}
\norm{\overline{w}^k(1) - w^*}^2 \leq  2 \frac{ F(\overline{w}^k(1)) - F(w^*)}{\sigma}
\end{align}
and thus
\begin{align}
F(\hat{w}_i^{k+1}) & - F(w^*)  \leq   \frac{ F(\overline{w}^k(1)) - F(w^*)}{\sigma a_k T_k} \notag \\
&  + \frac{L^2 a_k}{2} \left[ 12 \frac{\log{(T_k \sqrt{n})}}{1 - \sqrt{\lambda_2}} + 19\right]  \notag \\
& + \frac{3 L^2 \sum_{s=1}^{k-1} a_{s} T_{s} }{T_k}\label{eq:Fwhat_bound}.
\end{align}
Now, from Theorem $3$ in \cite{zinkOnlineConvexOpt} which is a direct consequence of Theorem \ref{thm:zink} for the average sequence $\overline{w}$ viewed as a single processor lazy projection algorithm, we have that after executing $T_{k-1}$ gradient steps in round $k-1$,
\begin{align}
 F(\overline{w}^k(1)) - F(w^*) \leq \frac{\norm{\overline{w}^{k-1}(1) - w^*}^2}{2 a_{k-1} T_{k-1}} + \frac{ a_{k-1} L^2}{2}
\end{align}
and by repeatedly using strong convexity and  Theorem \ref{thm:zink} we see that
\begin{align}
 F(\overline{w}^k(1)) - F(w^*) \leq & \frac{F(\overline{w}^{k-1}(1)) - F(w^*)}{\sigma a_{k-1} T_{k-1}} + \frac{ a_{k-1} L^2}{2} \\
 \leq & \cdots \\
 \leq & \frac{F(\overline{w}^{1}(1)) - F(w^*)}{\prod_{j=0}^{k-1} (\sigma a_{k-j} T_{k-j})} \notag \\
 & + \sum_{j=1}^{k-1} \frac{a_{k-j} L^2}{2 \prod_{s=1}^{j-1} (\sigma a_{k-s} T_{k-s})} \label{eq:Fwbar_bound}.
\end{align}
Now, let us fix positive integers $b$ and $c$, and suppose we use the following rules to determine the step size and number of updates performed within each round: 
\begin{align}
a_k = &\frac{a_{k-1}}{b} = \cdots = \frac{a_1}{b^{k-1}} \\
T_k = & c T_{k-1} = \cdots = c^{k-1} T_1.
\end{align}
Combining \eqref{eq:Fwbar_bound} with \eqref{eq:Fwhat_bound} and invoking Lemma \ref{lem:startingPoint}, we have
\begin{align}
F(\hat{w}_i^{k+1}) & - F(w^*)  \leq  \frac{2 L^2}{\sigma \prod_{j=0}^{k-1} \left( \sigma a_1 T_1 \left( \frac{c}{b} \right)^{k-j-1} \right)}  \notag \\
& + \sum_{j=1}^{k-1} \frac{a_{1} L^2}{2 b^{k-j-1}   \prod_{s=0}^{j-1} \left(\sigma a_{1} T_{1} \left( \frac{c}{b} \right)^{k-s-1}  \right)} \notag \\
& +  \frac{L^2 a_1}{2 b^{k-1}} \left[ 12 \frac{\log{(T_1 c^{k-1} \sqrt{n})}}{1 - \sqrt{\lambda_2}} + 19\right] \notag \\
& + \frac{3L^2 \sum_{s=1}^{k-1} a_1 T_1 \left( \frac{c}{b}\right)^{s-1}  }{T_1 c^{k-1}}.
\end{align}
To ensure convergence to zero, we need $c \geq b$ and  $\sigma a_1 T_1 > 1$ or $a_1 > \frac{1}{T_1 \sigma}$. Given these restrictions, let us make the choices
\begin{align}
a_1 = 1,\ \ T_1 = \left\lceil \frac{2 }{\sigma} \right\rceil ,\ \  c = b = 2.
\end{align}
To simplify the exposition, let us assume that $T_1 = \frac{2}{\sigma}$ is an integer. Using the selected values, we obtain
\begin{align}
F(\hat{w}_i^{k+1}) & - F(w^*)  \leq  \frac{2 L^2}{\sigma \prod_{j=0}^{k-1} \left( 2  \left( \frac{2}{2} \right)^{k-j-1} \right)} \notag \\
& + \sum_{j=1}^{k-1} \frac{L^2}{ \cdot 2 \cdot  2^{k-j-1}   \prod_{s=0}^{j-1} \left(2   \left( \frac{2}{2} \right)^{k-s-1}  \right)} \notag \\
& +  \frac{L^2 }{2 \cdot 2^{k-1}} \left[ 12 \frac{\log{(\frac{2 }{\sigma} \cdot  2^{k-1} \sqrt{n})}}{1 - \sqrt{\lambda_2}} + 19\right] \notag \\
& + \frac{3L^2    \sum_{s=1}^{k-1}  \left( \frac{2}{2}\right)^{s-1}  }{ 2^{k-1}} \\
\leq  & \frac{2 L^2}{\sigma 2^k  } + \sum_{j=1}^{k-1} \frac{L^2}{ 2^{k-j}  2^j  } \notag \\
& +  \frac{L^2 }{2^{k}} \left[ 12 \frac{\log{( \frac{ 2^{k} \sqrt{n} }{\sigma})}}{1 - \sqrt{\lambda_2}} + 19\right]  + \frac{3L^2   (k-1) }{ 2^{k-1}} \\
\leq  & \frac{2 L^2}{\sigma 2^k } + \frac{L^2 (k-1)}{ 2^{k} }\notag \\
& +  \frac{L^2 }{ 2^{k}} \left[ 12 \frac{\log{(\frac{ 2^{k} \sqrt{n} }{\sigma})}}{1 - \sqrt{\lambda_2}} + 19\right]  + \frac{6L^2   (k-1) }{ 2^{k}} \\
\leq  & \frac{2 L^2}{\sigma 2^k } + \frac{L^2 (k-1)}{ 2^{k} }\notag \\
& +  \frac{L^2 }{2^{k}} \left[ 12 \frac{\log{(\frac{ 2^{k} \sqrt{n} }{\sigma})}}{1 - \sqrt{\lambda_2}} + 19\right]  + \frac{6L^2   (k-1) }{ 2^{k}}.
\end{align}
Finally, we have all we need to complete the analysis of Algorithm \ref{alg:dogd}.

\subsection{Proof of Theorem~\ref{thm:dogd_convergence}}

Suppose we run  Algorithm \ref{alg:dogd} for $T$ total steps at each node. This allows for $\tilde{k}$ rounds, where $\tilde{k}$ is determined by solving
\begin{align}
\sum_{i=1}^{\tilde{k}} T_i \leq T \Longleftrightarrow \sum_{i=1}^{\tilde{k}} 2 \cdot 2^i \leq T \Longleftrightarrow \tilde{k} \leq \log_2{\left( \frac{T}{2} + 1\right)}.
\end{align}
Using this value for $k$ we see that
\begin{align}
F(\hat{w}_i^{\tilde{k}+1}) & - F(w^*) \leq   \frac{L^2}{\sigma } 2^{\tilde{k}}  + \frac{L^2 (\tilde{k}-1)}{ 2^{\tilde{k}}} \notag \\
& +  \frac{L^2 }{ 2^{\tilde{k}}} \left[ 12 \frac{\log{( \frac{ 2^{\tilde{k}}  \sqrt{n} }{\sigma})}}{1 - \sqrt{\lambda_2}} + 19\right]  + \frac{6L^2   (\tilde{k}-1) }{  2^{\tilde{k}}} \\
\leq & \frac{L^2}{\sigma  \left( \frac{T}{2} + 1\right) }    + \frac{L^2 (\log_2 \left( \frac{T}{2} + 1\right)-1)}{\left( \frac{T}{2} + 1\right) } \notag \\
& +  \frac{ L^2 }{\left( \frac{T}{2} + 1\right)} \left[ 12 \frac{\log{\left( \frac{ \left( \frac{T}{2} + 1\right) \sqrt{n} }{\sigma}\right)}}{1 - \sqrt{\lambda_2}} + 19\right]  \notag \\
& + \frac{6L^2   (\left( \frac{T}{2} + 1\right)-1) }{\left( \frac{T}{2} + 1\right)} \notag \\
= & O\left(  \frac{ \log{(\sqrt{n} T)}}{ T (1 - \sqrt{\lambda_2})}\right) =  O\left( \frac{\log(\sqrt{n} T)}{T} \right),
\end{align}
when $\lambda_2$ is constant and does not scale with $n$, and this concludes the proof of Theorem~\ref{thm:dogd_convergence}.

\section{Extension to Stochastic Optimization} 
\label{sec:stochastic_opt}
The proof presented in the previous section can easily be extended to the case where each node receives a random estimate $\hat{g}(t)$ of the gradient, satisfying $\mathds{E}[\hat{g}(t)] = g(t)$, instead of receiving $g(t)$ directly. We assume that noisy gradients still have bounded variance i.e., $\mathds{E}[ \norm{\hat{g}_i(t)}^2 ] \leq L^2$. In this setting, instead of equation \eqref{eq:bound_starting_point}, we have
\begin{align}
A_2 = & \sum_{t=1}^{T_k} \frac{1}{n}  \sum_{i=1}^n \langle g_i(t), \overline{w}^k(t) - w^*\rangle \\
= & \sum_{t=1}^{T_k}  \langle \frac{1}{n}  \sum_{i=1}^n \hat{g}_i(t), \overline{w}^k(t) - w^*\rangle \notag \\
& + \sum_{t=1}^{T_k} \frac{1}{n}  \sum_{i=1}^n \langle g_i(t) - \hat{g}_i(t), \overline{w}^k(t) - w^*\rangle.
\end{align}
However, the proof of Theorem~\ref{thm:zink} does not depend on the gradients being correct; rather, it holds for noisy gradients $\hat{g}_(t)$  as well. Moreover, we have $\mathds{E}[\norm{ \hat{g}_i(t)}] \leq L$, and by H\"{o}lder's inequality $\mathds{E}[ \norm{\hat{g}_i(t)} \norm{\hat{g}_j(t)} ] \leq  L^2$. Thus,
\begin{align}
\mathds{E} \left[ \norm{ \frac{1}{n} \sum_{i=1}^n \hat{g}_i(t) } \right] \leq \frac{1}{n^2} \sum_{i,j = 1}^n \mathds{E}[ \norm{\hat{g}_i(t)} \norm{\hat{g}_j(t)} ] \leq L^2.
\end{align}
Thus, invoking Theorem \ref{thm:zink},  if the new data and thus the subgradients are independent of the past, and since $\mathds{E}[\hat{g}_i(t)] = g_i(t)$, we have
\begin{align}
\mathds{E}[ A_2]  \leq & \frac{\norm{\overline{w}^k(1) - w^*}^2}{2 a_k} + \frac{{T_k} a_k L^2 }{2} \notag \\
& +\mathds{E}[\sum_{t=1}^{T_k} \frac{1}{n}  \sum_{i=1}^n \langle g_i(t) -  \hat{g}_i(t), \overline{w}^k(t) - w^*\rangle ]  \\
= & \frac{\norm{\overline{w}^k(1) - w^*}^2}{2 a_k} + \frac{{T_k} a_k L^2 }{2} \notag \\
&+ \sum_{t=1}^{T_k} \frac{1}{n}  \sum_{i=1}^n \langle \mathds{E} \left[ g_i(t) - \hat{g}_i(t) \right], \overline{w}^k(t) - w^*\rangle  \\
 = &\frac{\norm{\overline{w}^k(1) - w^*}^2}{2 a_k} + \frac{{T_k} a_k L^2 }{2}.
\end{align}
Furthermore, the network error bound holds in expectation as well, i.e.,
\begin{align}
\mathds{E}[\norm{\overline{w}^k(t) - w_i^k(t)}& ] \leq  \mathds{E}[\norm{\overline{z}^k(t) - z_i^k(t)}] \notag \\
\leq & 2 a_k L \frac{\log{(T_k \sqrt{n})}}{1 - \sqrt{\lambda_2}} + 3 a_k L  + \frac{L \sum_{s=1}^{k-1} a_s T_s}{T_k} 
\end{align}
Collecting all these observations we have shown that, in expectation,
\begin{align}
\mathds{E}\left[ F(\hat{w}_i^{k+1}) - F(w^*) \right]  \leq &\frac{\norm{\overline{w}^k(1) - w^*}^2}{2 a_k T_k} + \frac{ a_k L^2 }{2} \notag \\
& + L^2 a_k \left[ 6 \frac{\log{(T_k \sqrt{n})}}{1 - \sqrt{\lambda_2}} + 9 \right] \notag \\
& + \frac{3L^2  \sum_{s=1}^{k-1} a_s T_s}{T_k}
\end{align}
which, after using the update rules for $a_k$ and $T_k$, is exactly the same rate as before. We note however that there may still be room for improvement in the distributed stochastic optimization setting since \cite{StochOptStrognlyConvex} describes a single-processor algorithm that converges at a rate $O\left( \frac{1}{T} \right)$.

\section{Simulation} \label{sec:experiments}

To illustrate the performance of DOGD we simulate online training of a classifier by solving the problem \eqref{eq:svm} using a network of $10$ nodes arranged as a random geometric graph. Each node is given $T=600$ data points, and the input dimension is $d=100$. We set $\sigma = 0.1$ and generate the data from a standard normal distribution and classify them as $-1$ or $1$ depending on their relative position to a randomly drawn hyperplane in $\mathds{R}^d$. As we see in Figure \ref{fig:svml2_minimization}, DODG minimizes the objective much faster than Distributed Dual Averaging (DDA)~\cite{dualAveraging} which has a convergence rate of $O\left( \frac{\log(T \sqrt{n})}{\sqrt{T}}\right)$. DDA is simulated using the learning rate that is suggested in~\cite{dualAveraging}. We have observed that boosting this learning rate may yield faster convergence, but still not as fast as DOGD. Figure~\ref{fig:svml2_minimization} also shows the performance of a version of Fast Distributed Gradient Descent (FDGD)~\cite{FastDistributedGradMethods}. As we can see, FDGD fails to converge in an online or stochastic setting and ends up oscillating. 

\begin{figure}
\begin{center}
\includegraphics[width=3.3in]{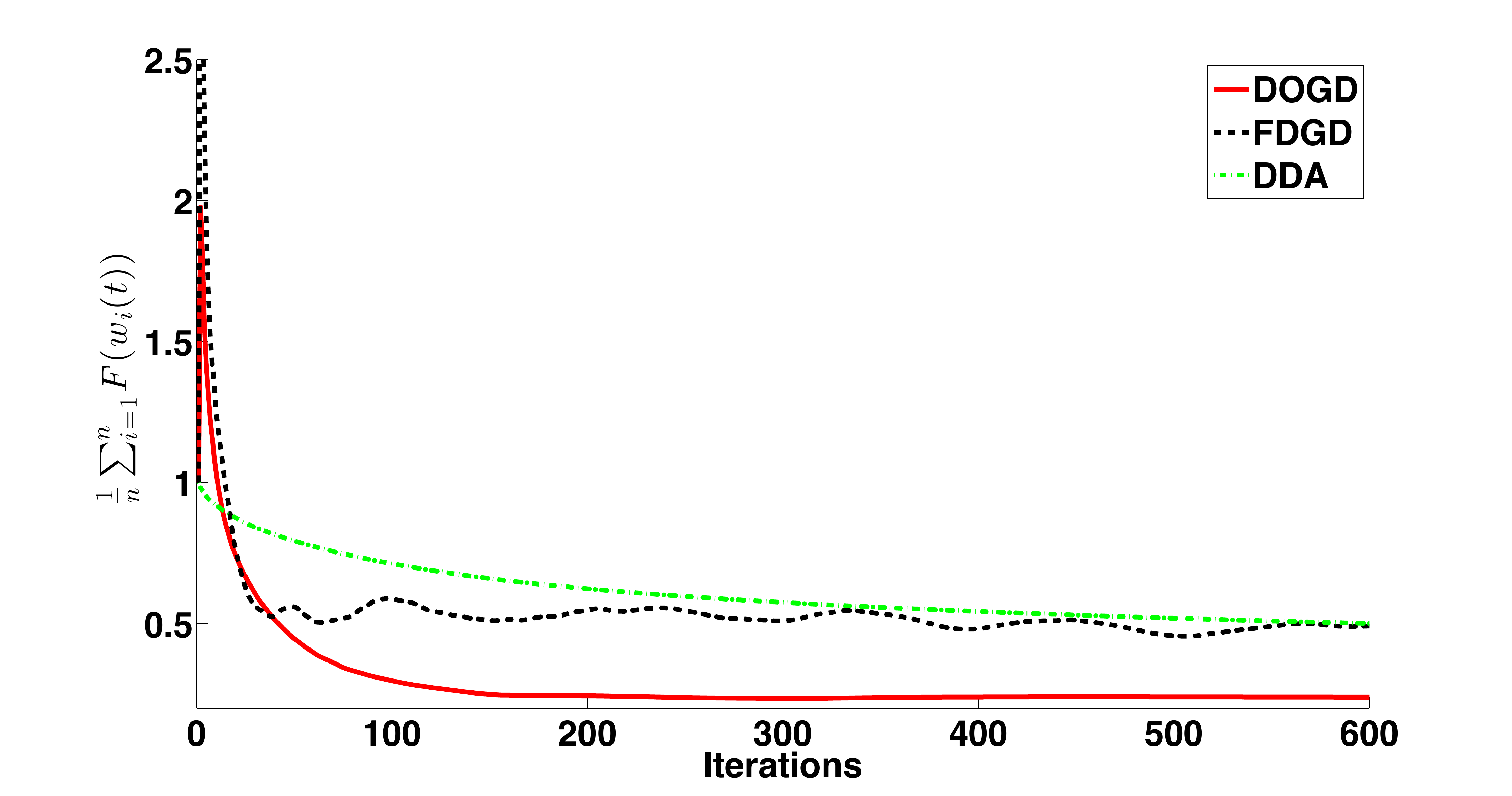} 
\end{center}
\caption{\label{fig:svml2_minimization} Optimization of a $d=100$ dimensional problem of the form \eqref{eq:svm} with a random network of $10$ nodes. Our proposed algorithm DOGD(red) converges faster than DDA(green) as expected from the $T$ instead of $\sqrt{T}$ in the denominator of the convergence rate bound. FDGD(black), is unable to converge in the online problem.}
 \end{figure}

\section{Future Work} \label{sec:future}

In this paper we have proposed and analyzed a novel distributed optimization algorithm which we call Distributed Online Gradient Descent (DOGD). Our analysis shows that DOGD converges at a rate $O(\frac{\log(\sqrt{n}T)}{T})$ when solving online, stochastic or batch constrained convex optimization problems if the objective function is strongly convex. This rate is optimal in the number of iterations for the online and batch setting and slower than a serial algorithm only by a logarithmic factor  in the stochastic optimization setting. 

In its current form, DOGD requires the nodes in the network to exchange gradient information at every iteration. Our preliminary investigation suggests that gradually performing more and more updates between each communication can speed up distributed optimization algorithms in the batch setting when one explicitly accounts for the time required to communicate data. Our future work will carry out a similar analysis for online and stochastic optimization algorithms.

\section*{Appendix}

\begin{lemma} \label{lem:P_tv_convergence} If $P$ is a doubly stochastic matrix defined over a strongly connected graph $G=(V,E)$ with $\abs{V}=n$ nodes so that $p_{ji} = 0$ if $(i,j) \not \in E$, then for any $t \leq  T$,
\begin{align}
\sum_{s=1}^{t-1 }\norm{\frac{1}{n} \vc{1}^T - \left[ P^{t-s+1} \right]_{i,:}}_1 \leq 1 + \frac{\log{(T \sqrt{n})}}{1 - \sqrt{\lambda_2}}
\end{align}
where $\lambda_2$ is the second largest eigenvalue of $P$.
\end{lemma}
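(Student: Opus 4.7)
The plan is to reduce the claim to a standard spectral mixing-time estimate for $P^{\tau}$ and then split the sum into a short initial segment controlled by a trivial bound and a long geometrically decaying tail controlled by the spectral gap.

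First I would establish the per-power estimate $\|[P^{\tau}]_{i,:} - \tfrac{1}{n}\vc{1}^T\|_1 \leq \sqrt{n}\,\lambda_2^{\tau}$. Since $G$ is undirected and $P$ is doubly stochastic with $p_{ji}=0$ off the edge set, $P$ may be taken symmetric with leading eigenvector $\vc{1}/\sqrt{n}$ at eigenvalue $1$ and remaining eigenvalues of modulus at most $\lambda_2 < 1$. Using that $P \cdot \tfrac{1}{n}\vc{1}\vc{1}^T = \tfrac{1}{n}\vc{1}\vc{1}^T \cdot P = \tfrac{1}{n}\vc{1}\vc{1}^T$ yields the clean identity $(P - \tfrac{1}{n}\vc{1}\vc{1}^T)^{\tau} = P^{\tau} - \tfrac{1}{n}\vc{1}\vc{1}^T$, whose spectral norm is therefore at most $\lambda_2^{\tau}$. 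Extracting row $i$ and invoking $\|v\|_1 \leq \sqrt{n}\|v\|_2$ for $v \in \mathds{R}^n$ delivers the estimate.

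Second, I would change variables to $\tau = t-s+1$ so that the left-hand side becomes $\sum_{\tau=2}^{t} \|[P^{\tau}]_{i,:} - \tfrac{1}{n}\vc{1}^T\|_1$, and then take the pointwise minimum of the spectral estimate and the trivial bound $\|[P^{\tau}]_{i,:} - \tfrac{1}{n}\vc{1}^T\|_1 \leq 2$, valid because both are probability vectors. Choosing the threshold $\tau^{\star}$ to be the smallest integer with $\sqrt{n}\,\lambda_2^{\tau^{\star}} \leq 1/T$ gives $\tau^{\star} = O(\log(T\sqrt{n})/\log(1/\lambda_2))$. The short segment $\tau \leq \tau^{\star}$ contributes at most $2\tau^{\star}$, while the tail $\tau > \tau^{\star}$ sums as a geometric series to at most $\sqrt{n}\,\lambda_2^{\tau^{\star}+1}/(1-\lambda_2) \leq 1/[T(1-\lambda_2)]$, which is itself at most $1/(1-\sqrt{\lambda_2})$ and so absorbed into the leading constant (recall $t \leq T$).

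To convert $\log(1/\lambda_2)$ in the denominator into $1-\sqrt{\lambda_2}$, I would apply the elementary inequality $\log(1/x) \geq 1-x$ for $x \in (0,1]$ with $x = \lambda_2$, and then factor $1 - \lambda_2 = (1-\sqrt{\lambda_2})(1+\sqrt{\lambda_2}) \geq 1 - \sqrt{\lambda_2}$, yielding $\log(1/\lambda_2) \geq 1 - \sqrt{\lambda_2}$. Substituting into the short-segment bound and adding the tail produces the stated inequality. The main obstacle is not conceptual---the argument is standard mixing-time bookkeeping---but rather tuning $\tau^{\star}$ and the trivial-versus-spectral dichotomy tightly enough to collapse everything into precisely $1 + \log(T\sqrt{n})/(1-\sqrt{\lambda_2})$ without picking up spurious multiplicative constants. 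Everything downstream, including the logarithmic network-error term used in the proof of Theorem~\ref{thm:dogd_convergence}, depends only on the logarithmic scaling and the $1-\sqrt{\lambda_2}$ denominator, not on sharp constants.
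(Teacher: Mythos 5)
Your proposal is correct and follows essentially the same route as the paper: a per-power bound $\norm{[P^\tau]_{i,:} - \tfrac{1}{n}\vc{1}^T}_1 \leq \sqrt{n}\,\lambda^\tau$ (the paper cites the Diaconis--Stroock total-variation bound with rate $\sqrt{\lambda_2}$ where you derive the slightly sharper rate $\lambda_2$ spectrally), followed by the same threshold split into a trivially bounded short segment and a geometrically small tail. The only cosmetic difference is that you sum the tail as a geometric series while the paper bounds each of its at most $T$ terms by $1/T$; both yield the $O(1)$ additive term, and your sharper per-power rate in fact absorbs the factor of $2$ from the trivial bound more cleanly than the paper's own accounting does.
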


\begin{IEEEproof}
If the consensus matrix $P$ is doubly stochastic it is straightforward to show that $P^t \rightarrow \frac{1}{n} \vc{1} \vc{1}^T$ as $t \rightarrow \infty$. Moreover, from standard Perron-Frobenius is it easy to show (see e.g., \cite{StookDiaconis})
\begin{align}
\norm{\frac{1}{n} \vc{1}^T - \left[ P^t \right]_{i,:}}_1 = 2 \norm{\frac{1}{n} \vc{1}^T - \left[ P^t \right]_{i,:}}_{TV} \leq \sqrt{n} \left(\sqrt{\lambda_2}\right)^t
\end{align}
so in our case $\norm{\frac{1}{n} \vc{1}^T - \left[ P^{t - s+1} \right]_{i,:}}_1  \leq  \sqrt{n} \left(\sqrt{\lambda_2}\right)^{t -s + 1}$. Next, demand that the right hand side bound is less than $\sqrt{n} \delta$ with $\delta$ to be determined:
\begin{align}
\sqrt{n} \left(\sqrt{\lambda_2}\right)^{t - s + 1} \leq \sqrt{n} \delta \Rightarrow t - s +1 \geq \frac{\log{(\delta^{-1})}}{\log{(\sqrt{\lambda_2}^{-1})}}.
\end{align}
So with the  choice $\delta^{-1} = \sqrt{n} T$,
\begin{align}
 \norm{\frac{1}{n} \vc{1}^T - \left[ P^{t - s + 1} \right]_{i,:}}_1 \leq \sqrt{n} \frac{1}{\sqrt{n} T} = \frac{1}{T}
\end{align}
if $t - s +1 \geq \frac{\log{(\delta^{-1})}}{\log{(\sqrt{\lambda_2}^{-1})}} = \hat{t}$. When $s$ is large and $t - s + 1 < \hat{t}$ we  take $\norm{\frac{1}{n} \vc{1}^T - \left[ P^{t - s +1} \right]_{i,:}}_1 \leq 2$. The desired bound is not obtained as follows
\begin{align}
\sum_{s=1}^{t-1} \norm{  \frac{1}{n} \vc{1}^T - \left[ P^{ t- s + 1} \right]_{i,:} }_1  = & \sum_{s=1}^{t- \hat{t} -1 } \norm{  \frac{1}{n} \vc{1}^T - \left[ P^{ t- s +1} \right]_{i,:} }_1 \\
& + \sum_{s = t - \hat{t}}^{t - 1} \norm{  \frac{1}{n} \vc{1}^T - \left[ P^{t- s+1} \right]_{i,:} }_1 \notag \\
 \leq &\sum_{s=1}^{t - \hat{t} -1 } \frac{1}{T} + \sum_{s = t - \hat{t}}^{t - 1} 2 \\
\leq & \frac{t - \hat{t} }{T}    + 2 \hat{t} \leq  1  + 2 \hat{t}
\end{align}
Since $t \leq  T$ we know that $t - \hat{t} < T$. Moreover, $\log{(\sqrt{\lambda_2})^{-1}} \geq 1 - \sqrt{\lambda_2}$. Using there two fact we arrive at the result. The same bound is true for any individual entry of $P^t$ approaching  $\frac{1}{n}$.
\end{IEEEproof}

\bibliographystyle{IEEEtran} 
\bibliography{../PhDThesis/References}

\end{document}